\newcommand{\no}[1]{}
\newcommand{\floor}[1]{\lfloor #1 \rfloor}
\newcommand{\bo}{\mathcal{O}}
\newcommand{\grl}{g_{rl}}
\renewcommand{\exp}[1]{\texttt{exp}(#1)}
\newcommand{\rmq}{\texttt{rmq}}
\newcommand{\psv}{\texttt{psv}}
\newcommand{\nsv}{\texttt{nsv}}
\renewcommand{\time}{\texttt{time}}
\renewcommand{\min}{\texttt{min}}
\renewcommand{\max}{\texttt{max}}
\newif\ifshowcomments
\begin{document}

\title{Balancing Run-Length Straight-Line Programs\thanks{Funded in part by Basal Funds FB0001, Fondecyt Grant 1-200038, and two Conicyt Doctoral Scholarships, ANID, Chile.}}

\author{Gonzalo Navarro \and Francisco Olivares \and
    Cristian Urbina}
\authorrunning{Navarro et al.}

\institute{CeBiB --- Center for Biotechnology and Bioengineering \\ Departament of Computer Science, University of Chile}

\maketitle
\begin{abstract}
It was recently proved that any SLP generating a given string $w$ can be transformed in linear time into an equivalent balanced SLP of the same asymptotic size. We show that this result also holds for RLSLPs, which are SLPs extended with run-length rules of the form $A \rightarrow B^t$ for $t>2$, deriving $\exp{A} = \exp{B}^t$. An immediate consequence is the simplification of the algorithm for extracting substrings of an RLSLP-compressed string. We also show that several problems like answering RMQs and computing Karp-Rabin fingerprints on substrings can be solved in $\bo(\grl)$ space and $\bo(\log n)$ time, $\grl$ being the size of the smallest RLSLP generating the string, of length $n$. We extend the result to solving more general operations on string ranges, in $\bo(\grl)$ space and $\bo(\log n)$ applications of the operation. In general, the smallest RLSLP can be asymptotically smaller than the smallest SLP by up to an $\bo(\log n)$ factor, so our results can make a difference in terms of the space needed for computing these operations efficiently for some string families.
\keywords{Run-length straight-line programs \and Substring range problems \and Repetitive strings}
\end{abstract}

\section{Introduction}
Enormous collections of data are being generated at every second nowadays. Already storing this data is becoming a relevant and practical challenge. Compression serves to represent the data within reduced space. Still, just storing the data in compressed form is not sufficient in many cases; one also requires to construct data structures that support various queries within the compressed space. For example, {\em index} data structures support the search for short patterns in compressed strings. In areas like Bioinformatics, these collections of strings are often very repetitive \cite{Przeworski2000}, which makes traditional compressors and indexes based on Shannon's entropy unsuitable for this task \cite{NavSurvey}.

Over the years, several compressors and data structures exploiting repetitiveness have been devised. Examples of this are the Lempel-Ziv family \cite{LZ76,Kreft2010} and the run-length Burrows-Wheeler transform (BWT) \cite{BW1994,GNP18}. While compressors based on Lempel-Ziv achieve the best compression ratios, indexes based on them are not very fast and provide limited functionality. On the other hand, indexes based on the BWT can efficiently solve a variety of queries over strings, but their compression ratio is far from optimal for repetitive sequences \cite{KK2022}.

Somewhere in between of Lempel-Ziv and BWT compression is {\em grammar compression}. This approach consists in constructing a deterministic context-free grammar generating only the string to be compressed; such grammars are called straight-line programs (SLPs). Although finding the smallest SLP generating a string is NP-complete \cite{Charikar2005}, there exist several heuristics   \cite{REPAIR,SEQUITUR} and approximations \cite{Jez2015,Rytter2003} producing SLPs of small size. The popularity of SLPs probably comes from their simplicity to expose repetitive patterns on strings, which is useful to avoid redundant computation in compressed space \cite{KMV2018,ZMV2021}. This makes SLPs ideal for indexing and answering queries in compressed space \cite{Bille2015,Ganardi2021}. 

A problem that complicates such computations is that the parse tree of the grammars can be arbitrarily tall. While tasks like accessing a symbol of the string in time proportional to the parse tree height is almost trivial, achieving $\bo(\log n)$ time on general grammars requires much more sophistication \cite{Bille2015}. Recently, Ganardi et al. \cite{Ganardi2021} showed that any SLP can be balanced without incurring in an (asymptotic) increase in its size. This simplified several problems that where difficult for general SLPs, but easy if the depth of their parse tree is $\bo(\log n)$. Accessing a symbol in time $\bo(\log n)$ is nearly optimal, actually \cite{VY13}. 

An extended grammar compression mechanism are the run-length SLPs introduced by Nishimoto et al. \cite{Nishimoto2016}. An RLSLPs is an SLP extended with run-length rules of the form $A \rightarrow B^t$ for some $t > 2$, which derive $\exp{A} = \exp{B}^t$. While the size of the smallest SLP generating a string of length $n$ is always $\Omega(\log n)$, the smallest RLSLP can be of size $\bo(1)$ for some string families, which exhibit a logarithmic gap between the compression power of SLPs and RLSLPs. RLSLP have recently gained popularity for indexing. For example, all known {\em locally consistent grammars} are RLSLPs, and they have been a key component in the most recent indices for repetitive text collections. A locally consistent grammar is built through consecutive applications of a locally consistent parsing, which is a method to partition a string into non-overlapping blocks, such that equal substrings are equally parsed with the possible exception of their margins. Gagie et al. \cite{jacm19} built an index based in locally consistent grammars using $\bo(r\log\log n)$ space, with which they were able to count the $occ$ occurrences of a length-$m$ pattern in optimal time $\bo(m)$ and locate them in optimal time $\bo(m + occ)$, where $r$ is the number of runs in the BWT \cite{BW1994} of the string. Kociumaka et al. \cite{talg} also built a locally consistent grammar to index a string. Their grammar can count and locate the pattern in optimal time using $\bo(\gamma\log\frac{n}{\gamma}\log^{\epsilon} n)$ space, where $\gamma$ is the size of the smallest string attractor of the string \cite{attractors}.

In this paper we extend the results of Ganardi et al.\ to RLSLPs, that is, we show that one can always balance an RLSLP in linear time without increasing its asymptotic size. This result yields a considerable simplification to the algorithm for accessing any symbol of the string in logarithmic time \cite[Appendix~A]{talg}. It has other implications, like computing range minimum queries (RMQs) \cite{FH2011} or Karp-Rabin fingerprints \cite{KR87}, in $\bo(\log n)$ time and within $\bo(\grl)$ space. We generalize those concepts and show how to compute a wide class of semiring-like functions over substrings of an RLSLP-compressed string within $\bo(g_{rl})$ space and $\bo(\log n)$ applications of the function.

\section{Terminology}

\subsection{Strings}

Let $\Sigma$ be any finite set of {\em symbols} (an {\em alphabet}). A {\em string} $w$ is any finite tuple of elements in $\Sigma$. The {\em  length} of a string is the length of the tuple, and the {\em empty string} of length $0$ is denoted by $\varepsilon$. The set $\Sigma^*$ is formed by all the strings that can be defined over $\Sigma$. For any string $w =w_1\dots w_n$, its $i$-th symbol is denoted by $w[i]  = w_i$. Similarly, $w[i: j] = w_i\dots w_j$ with $1 \leq i \leq j \leq n$, or $\varepsilon$ if $j < i$. We also define $w[:i] = w_1\dots w_i$ and $w[i:] = w_i\dots w_n$. If $x[1: n]$ and $y[1: m]$ are strings, the concatenation operation $xy$ is defined as $xy = x_1\dots x_ny_1\dots y_m$. If $w = xyz$, then $y$ (resp. $x, z$) is a {\em substring} (resp. {\em prefix}, {\em suffix}) of $w$.

\subsection{Straight-Line Programs}

A {\em straight-line program} (SLP) is a deterministic context-free grammar generating a unique string $w$. More formally, an SLP is a context free grammar $G = (V, \Sigma, R, S)$ where $V$ is the set of variables (or non-terminals), $\Sigma$ is the set of terminal symbols (disjoint from $V$), $R \subseteq V \times (V\cup\Sigma)^*$ is the set of rules and $S$ is the initial variable; satisfying that each variable has only one rule associated, and that the variables are ordered in such a way that the starting variable is the greater of them, and any variable can only refer to other variables strictly lesser than itself or terminals, in the right-hand side of its rule. Any variable $A$ derives a unique string $\exp{A}$, and the string generated by the SLP, is the string generated by its starting variable. The size of an SLP is defined as the sum of the lengths of the right-hand side of its rules. The size of the smallest SLP generating a string is denoted by $g$, and is a relevant measure of repetitiveness. An SLP generating a non-empty string is often given in so-called Chomsky Normal Form, that is, with all its rules being of the form $A\rightarrow BC$ or $A\rightarrow a$ for $A,B,C$ variables, and $a$ a terminal symbol. 

While computing the smallest grammar is an NP-hard problem \cite{Charikar2005}, there exist several heuristic providing $\log$-approximations of the smallest SLP \cite{Jez2015,Rytter2003}. SLPs are popular as compression devices because several problems over strings can be solved efficiently using their SLP representation, without ever decompressing them. Examples of this are accessing to arbitrary positions of $w$, extracting substrings, and many other kind of queries \cite{Bille2015}. For several queries, it is convenient to have a balanced SLP, that is, an SLP whose parse tree has $\bo(\log n)$ depth. Recently, Ganardi et al. showed that any SLP can be balanced \cite{Ganardi2021}.

\subsection{Directed Acyclic Graph of an SLP}

A {\em directed acyclic graph} (DAG) is a directed multigraph $D$ without cycles (nor loops). We denote by $|D|$ the number of edges in this DAG. For our purposes, we assume that any DAG has a distinguished node $r$, satisfying that any other node can be reached from $r$, and has no incoming edges. We also assume that if a node has $k$ outgoing edges, they are numbered from $1$ to $k$.  The {\em sink nodes} of a DAG are the nodes without outgoing edges. The set of sink nodes of $D$ is denoted by $W$. We denote the number of paths from $u$ to $v$ as $\pi(u, v)$, and $\pi(u, V) = \sum_{v \in V}\pi(u, v)$ for a set $V$ of nodes. The number of paths from the root to the sink nodes is $n(D) = \pi(r, W)$.

One can interpret an SLP generating a string $w$ as a DAG $D$: There is a node for each variable in the SLP, the root node is the initial variable, terminal rules of the form $A \rightarrow a$ are the sink nodes, and a variable with rule $A \rightarrow B_1B_2\dots B_k$ has outgoing edges $(A, i, B_i)$ for $i \in [1..k]$. Note that if $D$ is a DAG representing $G$, then $n(D) = |\exp{G}| = |w|$.

\subsection{Run-Length Straight-Line Programs}

A {\em run-length straight-line program} (RLSLP) is an SLP extended with {\em run-length} rules \cite{Nishimoto2016}. An RLSLP can have rules of the form:

\begin{itemize}
    \item $A \rightarrow a$, for some terminal symbol $a$.
    \item $A \rightarrow A_1A_2\dots A_k$, for some variables $A_1, \dots, A_k$ and $k > 1$.
    \item $A \rightarrow B^t$, for some $t > 2$.
\end{itemize}

The string generated by a variable $A$ with rule $A \rightarrow B^t$ is $\exp{B}^t$. A run-length rule is considered to have size 2 (one word is needed to store the exponent). We denote by $g_{rl}$ to the size of the smallest RLSLP generating the string.
The depth of the RLSLP is the depth of its associated equivalent SLP, obtained by {\em unfolding} its run-length rules $A \rightarrow B^t$ into rules of the form $A \rightarrow BB\dots B$ of length $t$. Observe that a rule of the form $A \rightarrow A_1A_2\dots A_k$ can always be transformed into $\bo(k)$ rules of size 2, with one of them derivating the same string as $A$. Doing this for all rules can increase the depth of the RLSLP, but if $k$ is bounded by a constant, then this increase is only by a constant factor.

\section{Balancing Run-Length Straight-Line Programs}\label{sec:ballancing}

The idea utilized by Ganardi et al. to transform an SLP $G$ into an equivalent balanced SLP of size $\bo(|G|)$ \cite[Theorem~1.2]{Ganardi2021}, can be adapted to work with RLSLPs. First, we state some definitions and results proved in their work, which we need to obtain our result. 

\begin{definition}{(Ganardi et al. \cite[page 5]{Ganardi2021})}
Let $D$ be a DAG, and define the pairs $\lambda(v) = (\floor{\log_2\pi(r,v)}, \floor{\log_2\pi(v, W))})$. The {\em symmetric centroid decomposition (SC-decomposition)} of a DAG $D$ produces a set of edges between nodes with the same $\lambda$ pairs defined as $E_{scd}(D) = \{(u, i, v) \in E\,|\,\lambda(u) = \lambda(v)\}$, partitioning $D$ into disjoint paths called {\em SC-paths} (some of them possibly empty).
\end{definition}

The set $E_{scd}$ can be computed in $\bo(|D|)$ time. If $D$ is the DAG of an SLP $G$ this becomes $\bo{(|G|)}$. The following lemma justifies the name ``SC-paths''.

\begin{lemma}{(Ganardi et al. \cite[Lemma~2.1]{Ganardi2021})}\label{lemma:scd}
Let $D = (V , E)$ be a DAG. Then every node has at most one outgoing and at most one incoming edge from $E_{scd}(D)$. Furthermore, every path from the root r to a sink node contains at most $2\log_2 n(D)$ edges that do not belong to $E_{scd}(D)$.
\end{lemma}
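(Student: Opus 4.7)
The plan is to prove the two claims separately, both starting from a simple monotonicity observation: for every edge $(u,v)$ in $D$, concatenating any $r$-to-$u$ path with $(u,v)$ yields a distinct $r$-to-$v$ path, so $\pi(r,v) \geq \pi(r,u)$; symmetrically $\pi(u,W) \geq \pi(v,W)$. Consequently, along any path from the root to a sink, the first coordinate of $\lambda$ is non-decreasing and the second is non-increasing.

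For the first claim (at most one incoming and one outgoing SC-edge at every node), I would argue by contradiction. Suppose $u$ has two distinct outgoing SC-edges leading to $v_1$ and $v_2$ (possibly $v_1 = v_2$ in the multigraph sense). All three share the same $\lambda$, so writing $k = \lfloor \log_2 \pi(u,W) \rfloor$ we get $\pi(v_i,W) \geq 2^k$ for each $i$. Since paths from $u$ to $W$ through different outgoing edges are disjoint, $\pi(u,W) \geq \pi(v_1,W) + \pi(v_2,W) \geq 2^{k+1}$, contradicting $\lfloor \log_2 \pi(u,W) \rfloor = k$. The subtle point here is to sum the contributions per edge rather than per target, so that parallel edges are treated correctly. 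The incoming case is symmetric, using $\pi(r,\cdot)$ in place of $\pi(\cdot,W)$.

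For the second claim, I would use a potential argument on $\lambda$. Along any root-to-sink path, the first coordinate of $\lambda$ starts at $0$ (since $\pi(r,r)=1$) and ends at most $\lfloor \log_2 n(D) \rfloor$ (since $\pi(r,w) \leq n(D)$ for any sink $w$), so over the whole path it can grow by at most $\log_2 n(D)$; symmetrically the second coordinate shrinks by at most $\log_2 n(D)$. Every non-SC edge $(u,v)$ satisfies $\lambda(u) \neq \lambda(v)$, and by monotonicity this forces a strict change in at least one of the two coordinates (an increase in the first or a decrease in the second). A charging argument then bounds the number of non-SC edges on the path by $2\log_2 n(D)$. The main technical care is in getting the multigraph counting right in the uniqueness step; once that is done, the path bound follows cleanly from monotonicity together with the definition of $\lambda$.
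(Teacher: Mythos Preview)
The paper does not give its own proof of this lemma; it is quoted verbatim from Ganardi et al.\ \cite{Ganardi2021} and used as a black box. Your argument is correct and is in fact the standard one: the contradiction for two outgoing (resp.\ incoming) SC-edges via $\pi(u,W)\ge\pi(v_1,W)+\pi(v_2,W)\ge 2^{k+1}$, and the potential argument bounding the number of non-SC edges by the total variation of the two integer-valued monotone coordinates of $\lambda$ along a root-to-sink path. Your care in summing per edge rather than per target handles the multigraph case cleanly, and the charging step is sound because each coordinate, being integer and monotone with total change at most $\lfloor\log_2 n(D)\rfloor$, can strictly move on at most $\lfloor\log_2 n(D)\rfloor$ edges.
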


Note that the sum of the lengths of all SC-paths is at most the number of nodes of the DAG, or the number of variables of the SLP.

The following definition and technical lemma are needed to construct the building blocks of our balanced RLSLPs.

\begin{definition}{(Ganardi et al. \cite[page~7]{Ganardi2021})}
A {\em weighted string} is a string $w \in \Sigma^*$ equipped with a {\em weight function} $||\cdot||: \Sigma \rightarrow \mathbb{N} \backslash \{0\}$, which is extended homomorphically. If $A$ is a variable in an SLP $G$, then we also write $||A||$ for the weight of the string $\exp{A}$ derived from $A$.
\end{definition}

\begin{lemma}{(Ganardi et al. \cite[Proposition~2.2]{Ganardi2021})}\label{lemma:weighted}
For every non-empty weighted string $w$ of length $n$ one can construct in linear
time an SLP $G$ with the following properties:
\begin{itemize}
    \item $G$ contains at most $3n$ variables
    \item All right-hand sides of $G$ have length at most 4
    \item $G$ contains suffix variables $S_1 , . . . , S_n$ producing all non-trivial suffixes of $w$
    \item every path from $S_i$ to some terminal symbol $a$ in the derivation tree of $G$ has length at most
$3 + 2(\log_2 ||S_i|| - \log_2 ||a||)$
\end{itemize}
\end{lemma}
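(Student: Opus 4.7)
The plan is to construct $G$ incrementally by scanning $w$ from right to left, introducing the suffix variables $S_n, S_{n-1}, \ldots, S_1$ one at a time while maintaining a ``block decomposition'' of the current suffix whose weights grow geometrically. Concretely, after processing $w[i:]$, I would keep a list $(B_1, B_2, \ldots, B_k)$ of already-defined variables satisfying $\exp{B_1}\cdots \exp{B_k} = w[i:]$, together with the invariant $||B_{j+1}|| \geq 2\,||B_j||$, which forces $k \leq \log_2 ||S_i|| + 1$. The variable $S_i$ itself is expressed via a right-leaning chain of spine variables $T_k := B_k$ and $T_j \to B_j\, T_{j+1}$, with $S_i := T_1$, so that every rule has length at most $2$, comfortably within the bound of $4$.

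To form $S_{i-1}$, I would create a terminal variable $a$ for $w_{i-1}$ and prepend it to the block list. If $||a|| \geq ||B_1||/2$, I introduce a fresh variable $B_1' \to a\, B_1$ and iterate: while the leftmost pair still violates the geometric invariant, merge them into a new variable of size $2$, and so on. Once the invariant is re-established, I build a new spine on top and let $S_{i-1}$ be its root, reusing those tail spine variables whose block sequences survived untouched. A standard amortized argument with potential $\Phi := $ (current number of blocks) then charges $\bo(1)$ new variables per prepended character: every merge destroys a block and is paid for by the drop in potential, and every new spine variable corresponds to a block modification at the leftmost end. Combined with the $n$ terminal variables, this yields the claimed bound of at most $3n$ variables in total.

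For the depth claim, a path from $S_i$ down to a terminal $a$ first descends the spine through at most $k = \bo(\log(||S_i||/||B_1||))$ right-leaning edges to reach some block $B_j$, and then, by induction on the construction, takes at most $\bo(\log(||B_j||/||a||))$ further edges inside the derivation tree rooted at $B_j$. Because the block weights double, the two contributions telescope to the stated bound $3 + 2(\log_2 ||S_i|| - \log_2 ||a||)$ once the constants hidden in the geometric invariant are tightened and the small additive overheads from the top of the spine and the bottom of the blocks are absorbed into the $3$.

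I expect the main obstacle to be the spine-reuse argument. A naive rebuild of the entire spine at every right-to-left step would already create $\Omega(\log n)$ variables per suffix and thus $\Omega(n \log n)$ variables overall, blowing the $3n$ budget. Showing that only the spine prefix sitting strictly above the modified blocks needs to be refreshed, and charging those refreshes against the same potential used to bound the merges, is the delicate accounting step that makes the whole construction collapse to linear total size; the other properties in the lemma then follow essentially by construction.
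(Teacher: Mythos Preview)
This lemma is not proved in the present paper; it is imported as Proposition~2.2 of Ganardi et al.\ and invoked as a black box in the proof of Theorem~\ref{thm:balancing}. There is therefore no argument here against which to compare your attempt.

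As for the attempt itself: the $3n$ size bound does go through along the lines you suggest. Once the merge cascade has absorbed $B_1,\dots,B_r$, the old spine node $T_{r+1}$ remains valid, so a \emph{single} fresh rule $S_{i-1}\to B^{\mathrm{new}}\,T_{r+1}$ per step suffices, and terminals, merges, and spine nodes each contribute at most $n$ variables. The genuine gap is in the depth bound. A right-leaning spine over blocks whose weights grow from left to right is intrinsically unbalanced in the weighted sense the lemma requires. Take $||w_i||=2^{i-1}$: the invariant $||B_{j+1}||\ge 2\,||B_j||$ holds with equality at every step, no merge ever fires, and the block list for $S_1$ is simply $(w_1,\dots,w_n)$. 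The spine path from $S_1$ down to the terminal $w_n$ then has length $n-1$, while $\log_2||S_1||-\log_2||w_n||=\log_2\frac{2^n-1}{2^{n-1}}<1$, so the required bound $3+2(\log_2||S_1||-\log_2||w_n||)<5$ is violated for every $n\ge 7$. The original construction avoids this by building a genuinely weight-balanced binary tree over the symbols rather than a one-sided chain; your geometric invariant controls the \emph{number} of blocks, but says nothing about the weighted depth needed to reach the heaviest block sitting at the far end of the spine.
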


We prove that any RLSLP can be balanced without asymptotically increasing its size. Our proof generalizes that of \cite[Theorem~1.2]{Ganardi2021} for SLPs.

\begin{theorem}\label{thm:balancing}Given an RLSLP $G$ generating a string $w$, it is possible to construct an equivalent balanced RLSLP $G'$ of size $\bo(|G|)$, in linear time, with only rules of the form $A\rightarrow a, A\rightarrow BC$, and $A \rightarrow B^t$, where $a$ is a terminal, $B$ and $C$ are variables, and $t > 2$.
\end{theorem}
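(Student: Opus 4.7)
The plan is to mirror the construction of Ganardi et al.\ \cite{Ganardi2021} while carefully accounting for the run-length rules. I would first normalise $G$ so that every non-run-length rule is binary ($A \to BC$), preserving $|G|$ up to a constant factor and keeping the RLSLP shape. Then I consider the \emph{compressed} DAG $D$ of $G$, where a rule $A \to B^t$ contributes a single edge from $A$ to $B$ with recorded multiplicity $t$ rather than $t$ parallel edges; the values $\pi(r,\cdot)$, $\pi(\cdot,W)$, and hence the pairs $\lambda$, can all be computed in $\bo(|G|)$ time by treating such a rule as multiplying counts by $t$, so that $\pi(A,W) = t\cdot\pi(B,W) = |\exp{A}|$.

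Next I compute the SC-decomposition of $D$ using the same definition (an edge is SC iff $\lambda(u) = \lambda(v)$). The analogue of Lemma \ref{lemma:scd} holds along any root-to-terminal path of the parse tree when one counts each run-length rule application as a single transition: the first coordinate of $\lambda$ is non-decreasing down the tree, the second is non-increasing, and each non-SC transition strictly changes one coordinate, so at most $2\log_2 n$ non-SC transitions occur per path. For each SC-edge arising from a run-length rule $A \to B^t$ I introduce a fresh helper $C$ with rule $B^{t-1}$ (a run-length rule if $t-1 > 2$, else the binary rule $BB$), and treat $A$ as if it had the binary rule $A \to BC$, with SC-edge to $B$ and right-context $C$. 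This costs $\bo(1)$ extra size per such edge, and $C$ is not placed on any SC-path.

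For each SC-path $A_0 \to A_1 \to \cdots \to A_k$ I then form two weighted strings over the variable alphabet with weights $||X|| = |\exp{X}|$: $w_L = \alpha_0 \cdots \alpha_{k-1} A_k$ and $w_R = \beta_{k-1} \cdots \beta_0$, where after binarisation and run-length unfolding each $\alpha_i$ or $\beta_i$ has length at most $1$. Applying Lemma \ref{lemma:weighted} to $w_L$ yields suffix variables $L_i$ deriving $\alpha_i \cdots \alpha_{k-1} A_k$, and applying it to the reverse of $w_R$ and then reversing the resulting grammar yields $R_i$ deriving $\beta_{k-1} \cdots \beta_i$. I set $A_i \to L_i R_i$, collapsing whichever half is empty. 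The sum of lengths of all weighted strings across all SC-paths is $\bo(|G|)$, so the final RLSLP has size $\bo(|G|)$ and uses only the three allowed rule forms.

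The depth bound is the delicate point. A root-to-terminal path in the unfolded $G'$ alternates between traversals inside the balanced SLP of one SC-path, non-SC transitions, and (at most one per visited SC-path) a single-level unfolding $C \to B^{t-1}$. By the analogue of Lemma \ref{lemma:scd} there are $\bo(\log n)$ non-SC transitions, and hence at most $\bo(\log n)$ helper unfoldings, each contributing depth $\bo(1)$. Inside each SC-path, Lemma \ref{lemma:weighted} contributes depth $\bo(\log(||A_i||/||\mathrm{leaf}||))$; since $||\cdot||$ weakly decreases along the path and drops by at least a constant factor at every non-SC transition, the per-SC-path logarithms telescope to $\bo(\log n)$ overall. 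The main obstacle I anticipate is precisely this bookkeeping of the helpers $C$ and verifying that the telescoping survives their insertion; once that is in place, linear-time construction is immediate since SC-decomposition, applications of Lemma \ref{lemma:weighted}, and all local rewrites run in $\bo(|G|)$ time.
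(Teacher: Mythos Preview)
Your approach is essentially the paper's---normalise, compute $\lambda$ and the SC-decomposition on the (implicitly unfolded) DAG in $\bo(|G|)$ time, then apply Lemma~\ref{lemma:weighted} along each SC-path---but you miss the one observation that makes the argument clean. For a run-length rule $A\to B^t$ with $t>2$ one has $\pi(A,W)=t\cdot\pi(B,W)\ge 2\,\pi(B,W)$, hence $\lfloor\log_2\pi(A,W)\rfloor\ge 1+\lfloor\log_2\pi(B,W)\rfloor$ and therefore $\lambda(A)\neq\lambda(B)$. The edge from $A$ to $B$ is thus \emph{never} an SC-edge, so every run-length variable necessarily sits at the bottom end $A_p$ of its SC-path. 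Since the Ganardi et al.\ rewriting only replaces the rules of $A_0,\dots,A_{p-1}$ and leaves the rule of $A_p$ intact, the run-length rules survive the balancing untouched; your helpers $C\to B^{t-1}$ are never created, and the depth bookkeeping you flag as ``the main obstacle'' simply does not arise.

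What remains is exactly the paper's proof: the interior of every SC-path consists solely of binary-rule variables, the suffix/prefix SLPs of Lemma~\ref{lemma:weighted} are built over those, the telescoping depth bound goes through verbatim, and at the end one reattaches the original terminal and run-length rules unchanged. Your construction is not wrong---the helper case is vacuous---but recognising that it is vacuous is precisely the content of the proof.
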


\begin{proof}Without loss of generality, assume that $G$ has rules of length at most 2, so it is almost in Chomsky Normal Form, except because it has run-length rules. Transform the RLSLP $G$ into an SLP $H$ by unfolding its run-length rules, and then obtain the SC-decomposition $E_{scd}(D)$ of the DAG $D$ of $H$. Observe that the SC-paths of $H$ use the same variables of $G$, so it holds that the sum of the lengths of all the SC-paths of $H$ is less than the number of variables of $G$. Also, note that any variable $A$ of $G$ having a rule of the form $A\rightarrow B^t$ for some $t > 2$ is necessarily an endpoint of an SC-path in $D$, otherwise $A$ would have $t$ outgoing edges in $E_{scd}(D)$, which cannot happen.\footnote{Seen another way, $\lambda(A) \not= \lambda(B)$ because $\log_2 \pi(A,W) = \log_2 (t \cdot \pi(B,W)) > 1 + \log_2 \pi(B,W)$.} This implies that the balancing procedure of Ganardi et al. over $H$, which transforms the rules of variables that are not the endpoint of an SC-path in the DAG $D$, will not touch variables that originally were run-length in $G$.

Let $\rho=(A_0 , d_0 , A_1 ), (A_1 , d_1 , A_2 ), \dots , (A_{p-1} , d_{p-1} , A_p)$ be an SC-path of $D$. It holds that for each $A_i$ with $i \in [0..p-1]$, in the SLP $H$, its rule goes to two distinct variables, one to the left and one to the right. For each variable $A_i$, with $i \in [0..p-1]$, there is a variable $A_{i+1}'$ that is not part of the path. Let $A_1'A_2'\dots A_p'$ be the sequence of those variables. Let $L = L_1L_2\dots L_s$ be the subsequence of left variables of the previous sequence. Then construct an SLP of size $\bo(s) \leq \bo(p)$ for the sequence $L$ (seen as a string) as in Lemma \ref{lemma:weighted}, using $|\exp{L_i}|$ in $H$ as the weight function. In this SLP, any path from the suffix nonterminal $S_i$ to a variable $L_j$ has length at most $3 + 2(\log_2 ||S_i|| - \log_2 ||L_j||)$. Similarly, construct an SLP of size $\bo(t)\leq \bo(p)$ for the sequence $R = R_1R_2\dots R_t$ of right symbols in reverse order, as in Lemma \ref{lemma:weighted}, but with prefix variables $P_i$ instead of suffix variables. Each variable $A_i$, with $i \in [0..p-1]$, derives the same string as $w_{\ell}A_pw_{r}$, for some suffix $w_{\ell}$ of $L$ and some prefix $w_{r}$ of $R$. We can find rules deriving these prefixes and suffixes in the SLPs produced in the previous step, so for any variable $A_i$, we construct an equivalent rule of length at most 3. Add these equivalent rules, and the left and right SLP rules to a new RLSLP $G'$. Do this for all SC-paths. Finally, we add the original terminal variables and run-length variables of the RLSLP $G$, so $G'$ is an RLSLP equivalent to $G$. 

The SLP constructed for $L$ has all its rules of length at most 4, and $3s \leq 3p$ variables. The same happens with $R$. The other constructed rules also have length at most 3, and there are $p$ of them. Summing over all SC-paths we have $\bo(|G|)$ size. The original terminal variables and run-length variables of $G$ have rules of size at most 4, and we keep them. Thus, the RLSLP $G'$ has size $\bo(|G|)$.

Any path in the derivation tree of $G'$ is of length $\bo(\log n)$. Let $A_0,\dots,A_p$ be an SC-path. Consider a path from a variable $A_i$ to an occurrence of a variable that is in the right-hand side of $A_p$ in $G'$. Clearly this path has length at most 2. Now consider a path from $A_i$ to a variable $A_j'$ in $L$ with $i < j \leq p$. By construction this path is of the form $A_i\rightarrow S_k \rightarrow^* A_j'$ for some suffix variable $S_k$ (if the occurrence of $A_j'$ is a left symbol), and its length is at most $1 + 3 + 2(\log_2 ||S_k|| - \log_2 ||A_j'||) \leq4+2\log_2||A_i||-2\log_2||A_{j}'||$. Analogously, if $A_j'$ is a right variable, the length of the path is bounded by $1 + 3 + 2(\log_2 ||P_k|| - \log_2 ||A_j'||) \leq4+2\log_2||A_i||-2\log_2||A_{j}'||$. Finally, consider a maximal path  to a leaf in the parse tree of $G'$. Factorize it as
$$A_0 \rightarrow^* A_1 \rightarrow^* \dots \rightarrow^* A_{k}$$ 
where each $A_i$ is a variable of $H$ (and also of $G$). Paths $A_i \rightarrow^* A_{i+1}$ are like those defined in the paragraph above, satisfying that their length is bounded by $4+2\log_2||A_i||-2\log_2||A_{i+1}||$. Observe that between each $A_i$ and $A_{i+1}$, in the DAG $D$ there is almost an SC-path, except that the last edge is not in $E_{scd}$. The length of this path is at most
$$\sum_{i=0}^{k-1}(4+2\log_2||A_i||-2\log_2||A_{i+1}||) \leq k + 2\log_2||A_0|| - 2\log_2||A_k||$$
By Lemma \ref{lemma:scd}, $k \leq 2\log_2 n$, which yields the $\bo(\log n)$ upper bound. The construction time is linear, because the SLPs of Lemma \ref{lemma:weighted} are constructed in linear time in the lengths of the SC-paths (summing to $\bo(|G|)$), and $E_{scd}(D)$ can be obtained in time $\bo(|G|)$ (instead of $\bo(H)$) if we represent in the DAG $D$ the edges of a variable $A$ with rule $A \rightarrow B^t$ as a single edge extended with the power $t$. This way, when traversing the DAG from root to sinks and sinks to root to compute $\lambda$ values, it holds that $\pi(A, W) = t \cdot\pi(B,W)$, and that $\pi(r, B) = t  \cdot\pi(r, A) + c$, where $c$ are the paths from root incoming from other variables. Thus, each run-length edge must be traversed only once, not $t$ times.

To have rules of size at most two, delete rules in $G'$ of the form $A \rightarrow B$ (replacing all $A$'s by $B$'s), and note that rules of the form $A \rightarrow BCDE$ or $A \rightarrow BCD$ can be decomposed into rules of length $2$, with only a constant increase in size and depth.  \qed
\end{proof}

\section{Substring Range Operations in $\bo(\grl)$ space}

\subsection{Karp-Rabin Fingerprints}\label{sec:fingerprints}
To answer signature $\kappa(w[p:q]) = (\sum_{i=p}^{q} w[i] \cdot c^{i - p})\bmod \mu$, for a suitable integer $c$ and primer number $\mu$, we use the following identity for any $p' \in [p..q-1]$:

\begin{equation}
\kappa(w[p : q]) = \bigg(\kappa(w[p:p']) + \kappa(w[p' + 1:q])\cdot c^{p' - p}\bigg)\bmod \mu \label{eq:partition}
\end{equation}

and then it holds

$$
\begin{aligned}
\kappa(w[p:p']) &= \bigg(\kappa(w[p:q]) - \frac{c^{q - p}}{c^{q - p'}}\cdot \kappa(w[p' + 1 : q]) \bigg)\bmod \mu  \\
\kappa(w[p' + 1 : q]) &= \bigg( \frac{\kappa(w[p : q]) - \kappa(w[p : p'])}{c^{p' - p}} \bigg)\bmod \mu,
\end{aligned}
$$

which implies that, to answer $\kappa(w[p : q])$, we can compute $\kappa(w[1 : p -1])$ and $\kappa(w[1 : q])$ and then subtract one to another. For that reason, we only consider computing fingerprints of text prefixes. Then, the recursive calls to our algorithm just need to know the right boundary of a prefix, namely computing signature on the substring $\exp{A}[1 : j]$ of the string expanded by a symbol $A$ of our grammar can be expressed as $\kappa(A, j)$.

Suppose that we want to compute the signature of a prefix $w[1 : j]$ and that there is a rule $A\rightarrow BC$ such that $\exp{A} = w[1 : q]$, with $j \leq q$. If $j = |\exp{B}|$ or $j = q$, we can have stored $\kappa(\exp{A})$ and $\kappa(\exp{B})$ and answer directly the query. On the other hand, if $j < |\exp{B}|$, we can descend in the derivation tree of $B$ until we find a nonterminal $B'$ such that $|\exp{B'}| = j$ and answer the same as in previous case. Otherwise, $|\exp{B}| < j < |\exp{A}|$, then we can use Eq. \ref{eq:partition} and answer $(\kappa(\exp{B}) + \kappa(\exp{C}[1 : j - |\exp{B}|]\cdot c^{|\exp{B}|})\bmod\mu$, where $\kappa(\exp{C}[1 : j - |\exp{B}|]$ is obtained by descending in the derivation tree of $C$. Then, in addition to storing $\kappa(\exp{A})$ for every nonterminal $A$, we also need to store $c^{|\exp{A}|}\bmod\mu$ and $|\exp{A}|$. Therefore, the cost of computing fingerprints is just the depth of the derivation tree of $A$.

The same does not apply for run-length rules $A \rightarrow B^t$, because we cannot afford the space consumption of storing $c^{t'\cdot|\exp{B}|}\bmod\mu$ for every $1 \leq t' \leq t$, as this could give us a structure bigger than $\bo{(g_{rl})}$. Instead, we can treat run-length rules as regular rules $A \rightarrow B \dots B$. Then, we can use the following identity
$$
\kappa(\exp{B^{t'-1}}) = \bigg(\kappa(\exp{B})\cdot \frac{c^{|\exp{B}|\cdot (t'-1)} - 1}{c^{|\exp{B}|} - 1} \bigg)\bmod\mu.
$$
Namely, to compute $\kappa(\exp{B^{t'-1}})$ we can have previously stored $c^{|\exp{B}|}\bmod\mu$ and $(c^{|\exp{B}|} - 1)^{-1}\bmod\mu$ and then compute the exponentiation in time $\bo(\log t)$. With this, if $j \in [t'\cdot|\exp{B}| + 1 .. (t' + 1)\cdot|\exp{B}|]$ we can handle run-length rules signatures $\kappa(\exp{B^t}[1 : j])$ as
$$
\bigg(\kappa(\exp{B^{t' - 1}}) + \kappa(\exp{B}[1 : j - t'\cdot |\exp{B}|])\cdot c^{t'\cdot|\exp{B}|} \bigg)\bmod\mu,
$$
where $\kappa(\exp{B}[1 : j - t'\cdot|\exp{B}|])$ is obtained by descending in the derivation tree of $B$. We are saving space by storing our structure at the cost of increasing computation time. As we show later, this time is in fact logarithmic.

\subsubsection{A structure for Karp-Rabin signatures.}

We construct a structure over a balanced RLSLP from Theorem \ref{thm:balancing}, using some auxiliary arrays. We define an array $L[A] = |\exp{A}|$ consisting of the length of the expansion of each nonterminal $A$. For terminals $a$, we assume $L[a] = 1$. Also, we define arrays $K_1$ and $K_2$ such that, for each nonterminal $A$,
$$
\begin{aligned}
K_1[A] &= \kappa(\exp{A}), \\
K_2[A] &= c^{L[A]}\bmod \mu,
\end{aligned}
$$
with the Karp-Rabin fingerprint of the string expanded by $A$ and the last power of $c$ used in the signature multiplied by $c$, namely the first power needed for signing the second part of the string expanded by $A$. For terminals $a$ we assume $K_1[a] = a \bmod\mu$ and $K_2[a] = c\bmod\mu$.
In addition, for rules $A \rightarrow B^t$ we store 
$$
\begin{aligned}
E[A] &= (K_2[B] - 1)^{-1} \mod \mu. \\
\end{aligned}
$$
The arrays $L$, $K_j$, and $E$ add only $\bo(g_{rl})$ extra space.
With these auxiliary structures, we can compute fingerprints in $\bo(\log n)$ time.

\begin{theorem}[cf.~\cite{fingerprints,talg}]
\label{thm:KR}
    It is possible to construct and index of size $\bo(\grl)$ supporting Karp-Rabin fingerprints for prefixes of $w[1 : n]$ in $\bo(\log n)$ time.
\end{theorem}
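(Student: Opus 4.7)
The plan is to mount the arrays $L,K_1,K_2,E$ on top of the balanced RLSLP produced by Theorem~\ref{thm:balancing}, formalize the recursive fingerprint evaluation already sketched in Section~\ref{sec:fingerprints}, and then carry out the complexity analysis.

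First, I would invoke Theorem~\ref{thm:balancing} on the smallest RLSLP to obtain an equivalent balanced RLSLP $G'$ of size $\bo(\grl)$ whose rules have only the forms $A\to a$, $A\to BC$ and $A\to B^t$, and whose derivation tree has depth $\bo(\log n)$. Next I would precompute, in a single bottom-up pass, the entries $L[A]$, $K_1[A]$ and $K_2[A]$ for every variable, and $E[A]=(K_2[B]-1)^{-1}\bmod\mu$ for every run-length rule $A\to B^t$. Each field stores a constant number of machine words per variable or rule, so the total space stays at $\bo(\grl)$.

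Second, I would formalize the recursive procedure $\kappa(A,j)$ that returns $\kappa(\exp{A}[1{:}j])$. The base cases $j=L[A]$ and $A\to a$ return $K_1[A]$ in $\bo(1)$ time. For $A\to BC$, if $j\le L[B]$ recurse on $(B,j)$; otherwise return $(K_1[B]+\kappa(C,j-L[B])\cdot K_2[B])\bmod\mu$ using Eq.~\eqref{eq:partition}. For $A\to B^t$, set $t'=\lfloor (j-1)/L[B]\rfloor$ and $j'=j-t'\cdot L[B]$, evaluate $\kappa(\exp{B^{t'}})=K_1[B]\cdot(K_2[B]^{t'}-1)\cdot E[A]\bmod\mu$ via fast exponentiation of $K_2[B]$ modulo $\mu$, and combine it with $\kappa(B,j')$ using Eq.~\eqref{eq:partition}. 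A prefix query $w[1{:}j]$ amounts to calling $\kappa(S,j)$ on the start variable $S$.

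The central argument, and the main obstacle, is the complexity analysis. Each call either hits a base case or descends one level in the derivation tree of $G'$, so the recursion traces a single root-to-leaf path of length $\bo(\log n)$ by Theorem~\ref{thm:balancing}. Away from run-length rules the per-level work is $\bo(1)$. At a run-length rule $A_i\to B_i^{t_i}$ the fast exponentiation costs $\bo(\log t_i)$, which individually is not constant; the delicate point is to show that these contributions sum to $\bo(\log n)$. To do this I would exploit the fact that $L[A_i]\ge t_i\cdot L[B_i]$ at a run-length step while $L[A_i]>L[A_{i+1}]$ at every other step, so along the chosen path $A_0,\dots,A_k$ we have the telescoping bound
\begin{equation*}
\sum_{i:\,\text{run-length}} \log_2 t_i \;\le\; \log_2 L[A_0] - \log_2 L[A_k] \;\le\; \log_2 n.
\end{equation*}
Hence the exponentiation costs telescope to $\bo(\log n)$, the overall query time is $\bo(\log n)$, and the index occupies $\bo(\grl)$ space, as claimed.
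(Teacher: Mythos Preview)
Your proposal is correct and follows essentially the same approach as the paper: build the balanced RLSLP of Theorem~\ref{thm:balancing}, store the arrays $L,K_1,K_2,E$, run the single-path recursion $\kappa(A,j)$, and bound the run-length exponentiation costs by a telescoping argument on the $\log t_i$. The only cosmetic difference is that the paper phrases the telescoping as an induction showing $k(A)\le h(A)+\log|\exp{A}|$, whereas you sum $\sum_i \log t_i$ directly along the recursion path; both rely on the same inequality $\log t_i \le \log L[A_i]-\log L[A_{i+1}]$ and yield the same $\bo(\log n)$ bound.
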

\begin{proof}
Let $G$ be a balanced RLSLP of size $\bo(g_{rl})$ constructed as in Theorem \ref{thm:balancing}. We construct arrays $L$, $K_i$, and $E$ as shown above.
 To compute $\kappa(A, j)$, we do as follows:
    \begin{enumerate}
        \item If $j = L[A]$, return $K_1[A]$.
        \item If $A \rightarrow BC$, then:
            \begin{enumerate}
                \item If $j \leq L[B]$, return 
                $\kappa(B, j)$.
                \item If $L[B] < j$, return 
                $\big(K_1[B] + \kappa(C, j - L[B])\cdot K_2[B]\big)\bmod \mu.$
            \end{enumerate}         
        \item If $A \rightarrow B^t$ for $t > 2$, then:
            \begin{enumerate}
                \item If $j \leq L[B]$, return $\kappa(B, j)$.
                \item If $j \in [t'L[B] + 1 .. (t' + 1)L[B]]$ with $1 \leq t' \leq t$, let
                    $e = K_2[B]^{t' - 1}$ and
                    $f = (e - 1) \cdot E[A] \bmod\mu$, 
                then return 
                $$
                    \big(K_1[B]\cdot f + \kappa(B, j - t'L[B])\cdot (e\cdot K_2[B])\big)\bmod\mu.
                $$
            \end{enumerate}
    \end{enumerate}
    
    Every step of the algorithm takes $\bo(1)$ time, so the cost is the depth of the derivation tree of $G$. The only exception is case 3(b), in which we have an exponentiation. For a non-terminal $A \rightarrow B^t$, this exponentiation takes $\bo(\log t)$ time, which is $\bo(\log (|\exp{A}| / |\exp{B}|))$ time for managing every run-length rule. We show next that $\bo(\log (|\exp{A}| / |\exp{B}|))$ telescopes to $\bo(\log |\exp{A}|)$, thus we obtain $\bo(\log n)$ time for the overall algorithm time.
    
    The telescoping argument is as follows. We prove by induction that the cost $k(A)$ to compute $\kappa(A,j)$ is at most $h(A)+\log |\exp{A}|$, where $h(A)$ is the height of the parse tree of $A$ and $j$ is arbitrary. Then in case 2 we have $k(A) \le 1+\max(k(B),k(C))$, which by induction is $\le 1+\max(h(B),h(C))+\log|\exp{A}| = h(A)+\log|\exp{A}|$. In case 3 we have 
    $k(A) \le 1+\log (|\exp{A}| / |\exp{B}|)+k(B)$, and since by induction $k(B) \le h(B)+\log|\exp{B}|$, we obtain  $k(A) \le h(A)+\log|\exp{A}|$. Since $G$ is balanced, this implies $k(A)=\bo(\log n)$ when $A$ is the root symbol.\qed
\end{proof}

\subsection{Range Minimum Queries}

A {\em range minimum query} (RMQ) over a string returns the position of the leftmost occurrence of the minimum within a range. For these type of queries, we can provide an $\bo(\grl)$ space and $\bo(\log n)$ time solution.

\begin{theorem}\label{thm:RMQ}
    It is possible to construct and index of size $\bo(\grl)$ supporting RMQs in $\bo(\log n)$ time.
\end{theorem}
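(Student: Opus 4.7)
The plan is to mimic the strategy of Theorem~\ref{thm:KR}, working on top of a balanced RLSLP obtained from Theorem~\ref{thm:balancing}. For every variable $A$ I would precompute and store $L[A]=|\exp{A}|$, the minimum value $M[A]$ occurring in $\exp{A}$, and the position $P[A]$ of its leftmost occurrence inside $\exp{A}$. These three arrays add $\bo(\grl)$ extra space and can be populated bottom-up in linear time: for a rule $A\to BC$, take $M[A]$ to be the smaller of $M[B]$ and $M[C]$ (breaking ties in favour of $B$) and set $P[A]$ to the corresponding position in $\exp{A}$; for a rule $A\to B^t$, set $M[A]=M[B]$ and $P[A]=P[B]$.

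To compute $\rmq(A,i,j)$, returning the leftmost minimum of $\exp{A}[i..j]$, the algorithm would descend in the parse tree. If $i=1$ and $j=L[A]$, it returns $(M[A],P[A])$ directly. For $A\to BC$, it recurses on $B$ alone if $j\le L[B]$, on $C$ alone if $i>L[B]$, and otherwise splits the query into a suffix subquery $\rmq(B,i,L[B])$ and a prefix subquery $\rmq(C,1,j-L[B])$, merging the two partial answers in $\bo(1)$. For a rule $A\to B^t$, it first determines in $\bo(1)$ the indices of the first and last copies of $\exp{B}$ intersecting $[i..j]$; at most two of those copies are only partially covered, and each is handled by one recursive call on $B$, while every fully covered copy contributes the same candidate $M[B]$, so a single $\bo(1)$ comparison merges all complete copies with the two partial answers. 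An arbitrary range query on $w$ is then served by one invocation at the start symbol.

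For the complexity I would adapt the telescoping argument at the end of the proof of Theorem~\ref{thm:KR}. After the initial split, every subsequent recursive call is an aligned suffix or prefix query, because full-range subqueries are resolved in $\bo(1)$ through the stored $M$ and $P$. Each aligned query spawns at most one further aligned query of the same type on a child, plus $\bo(1)$ merge work, so by induction the cost $k(A)$ is bounded by $h(A)+\bo(1)$, where $h(A)$ is the height of the parse subtree rooted at $A$. In the run-length case the exponent $t$ does not enter the recurrence, because all complete copies of $\exp{B}$ are summarised in a single comparison. Combining this with Theorem~\ref{thm:balancing} yields the $\bo(\log n)$ time bound.

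The main obstacle I foresee is the correct treatment of the run-length case. One must identify the \emph{leftmost} minimum of $\exp{A}[i..j]$ when several complete copies of $\exp{B}$ simultaneously attain the global minimum, so the merge step has to compare candidate positions in strict left-to-right order. Equally, one must ensure that the $\bo(1)$ per-node bookkeeping does not hide work proportional to $t$, which can be as large as $n$. Once these points are settled, the remainder of the argument is a direct translation of the Karp-Rabin proof.
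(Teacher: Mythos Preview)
Your proposal is correct and follows essentially the same approach as the paper: balance the RLSLP via Theorem~\ref{thm:balancing}, store $L[A]$ together with the global minimum and its leftmost position for each variable, recurse as you describe, and argue that after the single split into a suffix and a prefix query every further call produces at most one aligned child call plus $\bo(1)$ work. The paper's analysis is phrased slightly more directly---it simply observes that the two-call branch occurs once and that run-length rules cost $\bo(1)$ per visit---so the telescoping machinery from Theorem~\ref{thm:KR} is not actually needed here, but your argument reaches the same $\bo(\log n)$ bound.
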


\begin{proof}Let $G$ be a balanced RLSLP of size $\bo(g_{rl})$ constructed as in Theorem \ref{thm:balancing}. We define $\rmq(A, i, j)$ as the pair $(a, k)$ where $a$ is the least symbol in $\exp{A}[i:j]$, and $k$ is the absolute position within $\exp{A}$ of the leftmost occurrence of $a$ in $\exp{A}[i:j]$. Store the values $L[A] = |\exp{A}|$, and $M[A] = \rmq(A, 1, L[A])$, for every variable $A$, as arrays. These arrays add only $\bo(g_{rl})$ extra space. To compute $\rmq(A,i, j)$, do as follows:
    \begin{enumerate}
        \item If $i = 1$ and $j = L[A]$, return $M[A]$.
        \item If $A \rightarrow BC$, then:
        \begin{enumerate}
            \item If $i, j \leq L[B]$, return $\rmq(B, i, j)$.
            \item If $i, j > L[B]$, let $(a, k) = \rmq(C, i-L[B], j-L[B])$. Return $(a, L[B] + k)$.
            \item If $i \leq L[B]$ and $L[B] < j$ with $j-i+1 < L[A]$, let $(a_1, k_1) = \rmq(B, i, L[B])$ and $(a_2, k_2) = \rmq(C, 1, j-L[B])$. Return $(a_1, k_1)$ if $a_1 \leq a_2$,  or $(a_2, L[B] + k_2)$ if $a_2 < a_1$.

        \end{enumerate}         
        \item If $A \rightarrow B^t$ for $t > 2$, then:
        \begin{enumerate}
            \item If $i, j \in [t'L[B] +1..(t'+1)L[B]]$, let
            $(a, k) = \rmq(B, i - t'L[B], j - t'L[B])$. Return $(a, t'L[B] + k)$
            \item If $i \in [t'L[B] +1..(t'+1)L[B]]$ and $j \in [t''L[B] +1..(t''+1)L[B]]$ for some $t' < t''$. Let $(a_l, k_l) = \rmq(B, i-t'L[B], L[B])$, $(a_r, k_r) = \rmq(B, 1, j-t''L[B])$ and
            $(a_c, k_c) = M[B]$ (only if $t'' - t'>1$). Return $(a, k)$, where $a = \min(a_l, a_r, a_c)$, and $k$ is either $t'L[B] + k_l$, $t''L[B] + k_r$, or $(t'+1)L[B] + k_c$ (only if $t'' - t'>1$), depending on which of these positions correspond to an absolute position of $a$ in $\exp{A}$, and is the leftmost of them.
        \end{enumerate}
    \end{enumerate}
We analyze the number of recursive calls of the algorithm above. For cases 2(a), 2(b) and 3(a) there is only one recursive call, over a variable which is deeper in the derivation tree of $G$. In cases 2(c) and 3(b), it could be that two recursive calls occur, but overall, this can happen only one time in the whole run of the algorithm. The reason is that when two recursive calls occur at the same depth, from that point onward, the algorithm will be computing $\rmq(\cdot)$ over suffixes or prefixes of expansions of variables. If we try to compute for example $\rmq(A, i, L[A])$, and $A$ is of the form $A \rightarrow BC$, if $i < L[B]$, the call over $B$ is again a suffix call. If $A \rightarrow B^t$ for some $t > 2$, and we want to compute $\rmq(A, i, L[A])$, we end with a recursive call over a suffix of $B$ too. Hence, there are only $\bo(\log n)$ recursive calls to $\rmq(\cdot)$. The non-recursive step takes constant time, even for run-length rules, so we obtain $\bo(\log n)$ time. \qed
\end{proof}

Other related relevant queries are {\em previous smaller value} (PSV) and {\em next smaller value} (NSV) \cite{FMNtcs09,jacm19}, defined as follows:
\begin{itemize}
    \item $\psv(i)= \max(\{j \,|\, j < i, w[j] < w[i]\}\cup \{0\})$
    \item $\nsv(i) = \min(\{j \,|\, j > i, w[j] < w[i]\}\cup \{n+1\})$
    \item $\psv'(i, d)= \max(\{j \,|\, j < i, w[j] < d\}\cup \{0\})$
    \item $\nsv'(i, d) = \min(\{j \,|\, j > i, w[j] < d\}\cup \{n+1\})$
\end{itemize}

Note that the first two queries can be computed by accessing $w[i]$ in $\bo(\log n)$ time, and then calling one of the latter two queries, respectively. We show that the latter queries can be answered in  $\bo(\grl)$ space and $\bo(\log n)$ time.

\begin{theorem}\label{thm:PSV}
    It is possible to construct and index of size $\bo(\grl)$ supporting PSV and NSV queries in $\bo(\log n)$ time.
\end{theorem}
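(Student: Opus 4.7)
The approach mirrors Theorems \ref{thm:KR} and \ref{thm:RMQ}. Start from a balanced RLSLP $G$ of size $\bo(\grl)$ obtained via Theorem \ref{thm:balancing}, and augment it with arrays $L[A]=|\exp{A}|$ and $m[A]$ equal to the smallest symbol occurring in $\exp{A}$, for every variable $A$; both add only $\bo(\grl)$ extra space. It suffices to describe $\nsv'$, since $\psv'$ is fully symmetric. The plan is to recursively compute $\nsv'(A,i,d)$, returning the leftmost position $j\in [i..L[A]]$ with $\exp{A}[j]<d$, or $\bot$ if none exists; the answer to $\nsv'(i,d)$ is obtained by invoking this on the root.

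The procedure first prunes: if $m[A]\ge d$, return $\bot$ in $\bo(1)$. Otherwise it descends into the child or run-length copy containing $i$. For $A\to BC$ with $i\le L[B]$: try $r=\nsv'(B,i,d)$; if $r\ne \bot$ return $r$; else, if $m[C]<d$ return $L[B]+\nsv'(C,1,d)$, else return $\bot$. For $A\to BC$ with $i>L[B]$: recurse on $C$ with offset $L[B]$. For $A\to B^t$: compute $t'=\floor{(i-1)/L[B]}$ in $\bo(1)$, call $r=\nsv'(B,i-t'L[B],d)$; if $r\ne\bot$ return $t'L[B]+r$; else, if $t'+1<t$ and $m[B]<d$ return $(t'+1)L[B]+\nsv'(B,1,d)$, else return $\bot$. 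When called with $i=1$ the procedure becomes a pure ``leftmost-below-$d$'' descent: the $m[\cdot]$ test selects exactly one child (or the first matching $B$-copy) that must be explored.

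A single execution decomposes into two phases: a descent from the root to the parse-tree node that contains position $i$, and at most one subsequent leftmost descent, launched at the deepest ancestor whose right sibling (or next $B$-copy) has minimum below $d$. Each phase generates at most one non-trivial recursive call per level, since all other right-sibling candidates encountered during the first phase are discarded in $\bo(1)$ via the $m[\cdot]\ge d$ prune, and the leftmost descent picks a unique child (or the first matching copy) by the same test. Since $G$ is balanced, each phase costs $\bo(\log n)$, and every non-recursive step, including the arithmetic locating the relevant run-length copy, runs in $\bo(1)$. This yields $\bo(\log n)$ time and $\bo(\grl)$ space, and $\psv'$ follows by mirroring the construction (scanning left siblings and preceding $B$-copies instead).

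The main obstacle is the amortization argument collapsing the seemingly two-way branching at each recursion frame into the clean ``descend to $i$, then leftmost descent'' pattern. One must verify that whenever a descending call returns $\bot$, the associated right-sibling leftmost descent either succeeds and propagates all the way up, or is rejected in $\bo(1)$ by the $m[\cdot]\ge d$ prune; this mirrors the amortization in the RMQ proof of Theorem \ref{thm:RMQ}, but needs extra care in the run-length case to argue that, after the descending call on copy $t'$ returns $\bot$, it suffices to launch a single leftmost descent into copy $t'+1$ rather than examining the $t-t'-1$ remaining copies individually.
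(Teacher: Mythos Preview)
Your proposal is correct and follows essentially the same strategy as the paper: build the balanced RLSLP of Theorem~\ref{thm:balancing}, store $L[A]$ and the minimum symbol of $\exp{A}$, prune whenever that minimum is $\ge d$, and argue that the recursion decomposes into a root-to-$i$ descent plus at most one ``simple'' root-to-leaf search guided by the minima. The paper presents the symmetric $\psv'$ and lets the prune inside the recursive call play the role of your explicit $m[C]<d$ test, but the algorithm and the two-phase analysis are the same.
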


\begin{proof}Let $G$ be a balanced RLSLP of size $\bo(g_{rl})$ constructed as in Theorem \ref{thm:balancing}. Store the values $L[A] = |\exp{A}|$ and $M[A] = \min(\{\exp{A}[i]\,|\, i \in [1.. L[A]]\})$, for every variable $A$, as arrays. These arrays add only $\bo(g_{rl})$ extra space. To compute $\psv'(A, i, d)$, do as follows:
    \begin{enumerate}
        \item If $i=1$ or $M[A] \ge d$, return $0$.
        \item If $A \rightarrow a$, return $1$.
        \item If $A \rightarrow BC$, then:
        \begin{enumerate}
            \item If $i \leq L[B]+1$, return $\psv'(B, i, d)$.
            \item If $L[B]+1 < i$, let $k = \psv'(C, i - L[B], d)$. If $k > 0$, return $L[B] + k$, otherwise, return $\psv'(B, i, d)$.
        \end{enumerate}         
        \item If $A \rightarrow B^t$ for $t > 2$, then:
        \begin{enumerate}
            \item If $i \leq L[B]+1$, return $\psv'(B, i, d)$.
            \item  If $i \in [t'L[B] +1..(t'+1)L[B]]$, let $k = \psv'(B, i - t'L[B], d)$. If $k > 0$, return $t'L[B] + k$. Otherwise, return $(t'-1)L[B] + \psv'(B,i,d)$.
        \item  If $L[A]<i$, return $(t-1)L[B]+\psv'(B,i,d)$.
        \end{enumerate}
    \end{enumerate}
    The guard in point 1 guarantees that, in the simple case where $i$ is beyond $|\exp{A}|$, at most one recursive call needs more than $\bo(1)$ time. In general, we can make two calls in case 3(b), but then the second call (inside $B$) is of the simple type from there on. The case of run-length rules is similar. Thus, we obtain $\bo(\log n)$ time. The query $\nsv'$ is handled similarly. \qed
\end{proof}

\subsection{More general functions}

More generally, we can compute a wide class of functions in $\bo(\grl)$ space and $\bo(\log n)$ applications of the function.

\begin{theorem}Let $f$ be a function from strings to a set of size $n^{\bo(1)}$, such that $f(xy) = h(f(x), f(y), |x|, |y|)$ for any strings $x$ and $y$, where $h$ is a function computable in time $\bo(\time(h))$. Let $w[1:n]$ be a string. It is possible to construct an index to compute $f(w[i:j])$ in $\bo(\grl)$ space and $\bo(\time(h) \cdot \log n)$ time.
\end{theorem}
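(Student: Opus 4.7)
The plan is to generalize the approaches of Theorems \ref{thm:KR} and \ref{thm:RMQ}. First, I build a balanced RLSLP $G$ of size $\bo(\grl)$ via Theorem \ref{thm:balancing}, and precompute, for every variable $A$, the length $L[A] = |\exp{A}|$ together with the value $F[A] = f(\exp{A})$. Since the codomain of $f$ has size $n^{\bo(1)}$, each $F[A]$ fits in $\bo(1)$ machine words, so the extra space is $\bo(\grl)$.

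The core is a recursive routine that, given a variable $A$ and a range $[i,j] \subseteq [1,L[A]]$, returns $f(\exp{A}[i:j])$. If $[i,j] = [1,L[A]]$ it returns $F[A]$. For a rule $A \to BC$, the range either lies in $B$, lies in $C$, or straddles the boundary at $L[B]$; in the straddling case I recurse on a suffix of $\exp{B}$ and a prefix of $\exp{C}$ and combine the two results with one application of $h$ (providing the appropriate lengths, which are known). For a rule $A \to B^t$ the range covers a suffix of some copy of $\exp{B}$, then a contiguous block of $m \geq 0$ entire copies of $\exp{B}$, then a prefix of another copy; the two flanks are handled by recursion into $B$, and the flanks and middle are glued with $\bo(1)$ invocations of $h$. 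The external query $f(w[i:j])$ is launched at the root with range $[i,j]$.

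The delicate ingredient is the middle block in the run-length case. To obtain $f(\exp{B}^m)$ I use repeated squaring on $h$: starting from $F[B]$, the identity $\exp{B}^{2k} = \exp{B}^k \exp{B}^k$ lets me compute $f(\exp{B}^{2^\ell})$ for $\ell = 0,1,\dots,\lfloor\log_2 m\rfloor$ with one $h$-application per doubling, and then assemble $f(\exp{B}^m)$ from the bits of $m$ with $\bo(\log m)$ additional $h$-applications. The total cost here is $\bo(\time(h)\cdot \log m) = \bo(\time(h) \cdot \log(L[A]/L[B]))$.

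For the running-time analysis I follow the structure of Theorems \ref{thm:KR} and \ref{thm:RMQ}. A straddling split produces two recursive calls, but once this happens each branch becomes a suffix-of-prefix query, and from then on every visited rule yields at most one further descent (the other side is either a stored $F[\cdot]$ or a run-length middle block handled by the repeated-squaring step). Because $G$ is balanced, the recursion visits $\bo(\log n)$ variables, each contributing $\bo(\time(h))$ work outside the run-length middle blocks. The extra $\log(L[A]/L[B])$ factors incurred at run-length rules telescope exactly as in the proof of Theorem \ref{thm:KR}: summed along a root-to-leaf chain they are bounded by $\log L[\text{root}] = \log n$. The main obstacle will be making this telescoping rigorous in the presence of the two branches opened by the initial straddling call, but the argument carries over from Theorem \ref{thm:KR} with only bookkeeping changes, giving overall time $\bo(\time(h) \cdot \log n)$ in $\bo(\grl)$ space.
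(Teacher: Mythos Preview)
Your proposal is correct and follows essentially the same approach as the paper: balance the RLSLP via Theorem~\ref{thm:balancing}, store $L[A]$ and $F[A]$ (using the $n^{\bo(1)}$ codomain bound for the space claim), recurse with the three binary-rule cases and the flank--middle--flank decomposition for run-length rules, compute $f(\exp{B}^m)$ by repeated squaring in $\bo(\log m)$ applications of $h$, and bound the total time by the same telescoping argument as in Theorem~\ref{thm:KR} together with the observation from Theorem~\ref{thm:RMQ} that only one straddling split ever spawns two non-trivial branches. The paper makes the last point explicit by charging the second branch separately and noting it at most doubles the bound, which is exactly the ``bookkeeping change'' you allude to.
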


\begin{proof}Let $G$ be a balanced RLSLP of size $g_{rl}$ constructed as in Theorem \ref{thm:balancing}. Store the values $L[A] = |\exp{A}|$ and $F[A] = f(\exp{A})$, for every variable $A$, as arrays. These arrays add only $\bo(g_{rl})$ extra space because the values in $F$ fit in $\bo(\log n)$-bit words. To compute $f(A, i, j) = f(\exp{A}[i:j])$, we do as follows:
    \begin{enumerate}
        \item If $i = 1$ and $j = L[A]$, return $F[A]$.
        \item If $A \rightarrow BC$, then:
            \begin{enumerate}
                \item If $i, j \leq L[B]$, return $f(B, i, j)$.
                \item If $i, j > L[B]$, return $f(C, i-L[B], j-L[B])$.
                \item If $i \leq L[B] < j$, return $$h(f(B, i, L[B]), f(C, 1, j-L[B]), L[B]-i+1,j-L[B]).$$
            \end{enumerate}         
        \item If $A \rightarrow B^t$ for $t > 2$, then:
            \begin{enumerate}
                \item If $i, j \in [t'L[B] +1..(t'+1)L[B]]$, return 
                $f(B, i - t'L[B], j - t'L[B])$.
                \item If $i \in [t'L[B] +1..(t'+1)L[B]]$ and $j \in [t''L[B] +1..(t''+1)L[B]]$ for some $t' < t''$, let
                \begin{align*}
                f_l &= f(B, i-t'L[B], L[B])\\
                f_r &= f(B, 1, j-t''L[B])\\
                f_c(0) &= f(\varepsilon) \\
                f_c(1) &= F[B] \\
                f_c(i) &= h(f_c(i/2),f_c(i/2),L[B]^{i/2},L[B]^{i/2}) \text{ for even }i\\
                f_c(i) &= h(f_c(1), f_c(i-1), L[B], L[B]^{i-1}) \text{ for odd }i\\
                h_l &= h(f_l, f_c(t''-t'-1), (t'+1)L[B]-i+1,(t''-t'-1)L[B])
                \end{align*}
                then return
                $$h(h_l, f_r, (t'+1)L[B]-i+1 + (t''-t'-1)L[B],j-t''L[B]+1)$$
            \end{enumerate}
    \end{enumerate}
    Just like when computing RMQs in Theorem \ref{thm:RMQ}, there is at most one call in the whole algorithm invoking two non-trivial recursive calls. 
    To estimate the cost of each recursive call, the same analysis as for Theorem~\ref{thm:KR} works, because the expansion of whole nonterminals is handled in constant time as well, and the $\bo(\log t)$ cost of the run-length rules telescopes in the same way.
    
    The precise telescoping argument is as follows. We prove by induction that the cost $c(A)$ to compute $f(A,i,L[A])$ or $f(A,1,j)$ (i.e., the cost of suffix or prefix calls) is at most $\time(h)\cdot( h(A)+\log |\exp{A}|)$, where $h(A)$ is the height of the parse tree of $A$ and $i, j$ are arbitrary. Then in case 2 we have $c(A) \le \time(h)+\max(c(B),c(C))$, which by induction is at most $\time(h)\cdot(1+ \max(h(B),h(C))+\log|\exp{A}|)=\time(h)\cdot( h(A)+\log |\exp{A}|)$. In case 3 we have that the cost is $c(A) \le \time(h)\cdot \log (|\exp{A}| / |\exp{B}|) + c(B)$, which by induction yields 
    \begin{align*}
    c(A) &\leq \time(h) \cdot (\log (|\exp{A}| / |\exp{B}|) + h(B) + \log |\exp{B}|))\\ &\le \time(h)\cdot( h(A)+\log |\exp{A}|)
     \end{align*}
     In the case that two non-trivial recursive calls are made at some point when computing $f(A_k, i, j)$, this is the unique point in the algorithm where it happens, so we charge only $\time(h)\cdot(h(A_k)+\log|\exp{A_k}|)$ to the cost of $A_k$. Then the total cost of the algorithm starting from $A_0$ is at most $\time(h)\cdot(h(A_0) + \log|\exp{A_0}|)$ plus the cost $\time(h)\cdot(h(A_k) + \log|\exp{A_k}|)$ that we did not charge to $A_k$. This at most doubles the cost, maintaining it within the same order. Because the grammar is balanced, we obtain $\bo(\time(h) \cdot \log n)$ time.\qed
\end{proof}

\section{Conclusion}

In this work, we have shown that any RLSLP can be balanced in linear time without increasing it asymptotic size. This allows us to compute several substring range queries like RMQ, PSV/NSV, and Karp-Rabin fingerprints $\bo(\log n)$ time within $\bo(\grl)$ space. More generally, in $\bo(\grl)$ space we can compute the wide class of substring functions that satisfy $f(xy) = h(f(x), f(y), |x|, |y|)$, in $\bo(\log n)$ times the cost of computing $h$. Our work also simplifies some previously established results like retrieving substrings in $\bo(\log n)$ space and within $\bo(\grl)$ space. 

An open challenge is to efficiently count the number of occurrences of a pattern in the string, within $\bo(\grl)$ space \cite[Appendix~A]{talg}; we believe this could be possible now on balanced RLSLPs.

%
%
%
\bibliographystyle{splncs04}
\bibliography{bibliography}

\end{document}